\newtheorem{definition}{Definition}
\newtheorem{example}{Example}
\newtheorem{proposition}{Proposition}
\newtheorem{theorem}{Theorem}
\newtheorem{corollary}{Corollary}
\newcommand{\corref}[1]{Cor.~\ref{cor:#1}}
\newcommand{\defref}[1]{Def.~\ref{def:#1}}
\newcommand{\prpref}[1]{Prop.~\ref{prp:#1}}
\newcommand{\secref}[1]{\S\ref{sec:#1}}
\newcommand{\figref}[1]{Fig.~\ref{fig:#1}}
\newcommand{\tabref}[1]{Tab.~\ref{tab:#1}}
\renewcommand{\eqref}[1]{(\ref{eq:#1})}
\title{A Markov chain model for the search time for max degree nodes in a graph using a  biased random walk}
\author{%
\IEEEauthorblockN{Jonathan Stokes and Steven Weber}%
\IEEEauthorblockA{%
Department of Electrical and Computer Engineering\\
Drexel University, Philadelphia, PA 19104\\
}
\thanks{This research has been supported by the National Science Foundation under award \#IIS-1250786.}
}
\begin{document}
\IEEEoverridecommandlockouts \maketitle 
\begin{abstract} 
We consider the problem of estimating the expected time to find a maximum degree node on a graph using a (parameterized) biased random walk.  For assortative graphs the positive degree correlation serves as a local gradient for which a bias towards selecting higher degree neighbors will on average reduce the search time. Unfortunately, although the expected absorption time on the graph can be written down using the theory of absorbing Markov chains, computing this time is infeasible for large graphs.  With this motivation, we construct an absorbing Markov chain with a state for each degree of the graph, and observe computing the expected absorption time is now computationally feasible.  Our paper finds preliminary results along the following lines: i) there are graphs for which the proposed Markov model does and graphs for which the model does not capture the absorbtion time, ii) there are graphs where random sampling outperforms biased random walks, and graphs where biased random walks are superior, and iii) the optimal bias parameter for the random walk is graph dependent, and we study the dependence on the graph assortativity.
\end{abstract} 
\begin{IEEEkeywords}
graph search; Markov chain; biased random walks; greedy search, assortativity.
\end{IEEEkeywords}

\section{Introduction}

A graph representing Facebook's network of 1.4 billion users would require 1.4 billion nodes and hundreds of billions of edges, stretching the capacity of current hardware to hold the graph in memory.  This inability to represent large graphs makes them difficult to study. One way of sidestepping this issue is to study a representative subsample of the entire graph.  How one takes this sample often depends on the properties of the graph being studied. The simplest method of sampling a graph is to select nodes uniformly at random, however, many graphs, specifically those representing social networks, exhibit a power law degree distributions the probability of selecting a high degree node using this method is small. Intuitively a better sampling method would use the information gained by a sample to increase the probability of selecting a max degree node on subsequent samples. In the context of social networks, sampling methods for a max degree node that exploit local information exploit the friendship paradox; on average your friends have more friends than you do. {\em One goal is to study the impact that biasing the random choice of the next neighbor in a random walk towards selecting higher degree neighbors has on the time to reach a maximum degree node.}

Previous work has developed analytical bounds for the hitting time of a biased random walk, the time it takes to get from one node in a graph to another, and the cover time of the walk, the time it takes a walk to visit every node in a graph. Ikeda shows that the hitting and cover time of an undirected graph of $n$ nodes is upper bounded by $O(n^2)$ and $O(n^2 log(n))$ respectively \cite{IkeKub2003}.  Cooper shows all nodes of degree $n^a$ for $a < 1$ in a $n$ node power law graphs with exponent $c$ can be found in $O(n^{1-a(c-2)+\delta})$ steps \cite{CooRad2014}. Maiya evaluates several biased sampling strategies on real world graphs numerically showing that a walk which always transitions to the max degree node attached to its current node is a good method of exploring these graphs \cite{MaiWol2011}.

Our approach is different.  We first observe that the basic theory of absorbing finite-state Markov chains (\secref{KemSne}) yields expressions for the mean and variance of the random time to reach an absorbing state, which in our case is a maximum degree node on the $n$-node graph.  However, {\em computing} this mean and variance requires inverting an $n \times n$ matrix, which for large graphs is prohibitive (\secref{gm}).  Consequently, we consider the case of assortative graphs where the degrees of the endpoints of an edge are positively correlated, and recognize that for such graphs an intelligent strategy for minimizing the search time to find a maximum degree node is to exploit the local gradient provided by the assortativity, and to select the next node in the walk with a bias towards higher degree neighbors.  We construct a random walk on a significantly reduced state space, with one state for each degree in the graph, and the transition probability matrix derived from the joint degree distribution of the graph and the random walk bias parameter (\secref{approx}). The advantage of such a model is that we can analytically compute the absorption time of the random walk on this reduced state space quite easily. {\em A second goal is to study how the statistics of the random absorption time to find maximum degree nodes on large graphs may be captured using this model.}

As shown in \secref{results}, we investigate three natural questions, for which this paper offers only preliminary and numerical results: $i)$ for which graphs does the above Markov chain state reduction accurately capture the mean absorbtion time?, $ii)$ for which graphs does a biased random walk find a maximum degree node more quickly than does random sampling?, and $iii)$ how does the optimal bias parameter ($\beta$) depend upon the graph?  For our preliminary numerical investigation of these three questions, we employ Erd\H{o}s-R\'{e}nyi (ER) graphs, rewired (using a standard rewiring algorithm) to ensure the graph is connected and has a target assortativity ($\alpha \in [-1,+1]$); the assortativity plays the role of an independent control parameter in our investigations.  Our preliminary results suggest the following answers to the above questions.  First, there are certain $(\alpha,\beta)$ pairs for which the above model does and does not work well, in particular $\alpha < 0$ and $\beta$ large gives a poor match.  Second, for certain values of $\alpha$ the optimal $\beta^*(\alpha)$ are such that the resulting optimal absorbtion time of a biased random walk is superior to random sampling, but otherwise not.  Third, the optimal $\beta^*(\alpha)$ appears to have an increasing trend in $\alpha$, so that ``following the local gradient (choosing a maximum degree neighbor)'' is optimal for highly assortative graphs, while choosing a neighbor uniformly is superior for highly disassortative graphs.

\section{Absorption time for Markov chains}
\label{sec:KemSne}

We will have cause to use the theorem and corollary below on the mean and variance of the random absorption time $T_X$ of an absorbing finite-state discrete-time Markov chain (DTMC) $X = (X(t), t \in \Nbb)$ taking values in a finite state space $\Xmc$ (with $|\Xmc|=n$). Partition $\Xmc$ into absorbing states $\hat{\Xmc}$ (with $|\hat{\Xmc}|=\hat{n}$) and transient states $\check{\Xmc}$ (with $|\check{\Xmc}|=\check{n}=n-\hat{n}$), and so too partition the $n \times n$ transition matrix $P_X$ into submatrices 
\begin{equation}
P_X =\kbordermatrix{
             & \hat{\Xmc} & \check{\Xmc} \\
\hat{\Xmc}   & A          & O  \\
\check{\Xmc} & R          & Q
},
\end{equation}
where $A$ is the $\hat{n} \times \hat{n}$ submatrix of transition probabilities between absorbing states, $O$ is the $\hat{n} \times \check{n}$ zero matrix, $R$ is the $\check{n} \times \hat{n}$ submatrix of transition probabilities from transient to absorbing states, and $Q$ is the $\check{n} \times \check{n}$ submatrix of transition probabilities between transient states.  

\begin{definition}
\label{def:fundmatabtime}
The {\em fundamental matrix} for the DTMC $X$ is the $\check{n}\times\check{n}$ matrix $N_X = (I_{\check{n}}-Q)^{-1}$, for $I_{\check{n}}$ the $\check{n} \times \check{n}$ identity  matrix.  The {\em absorption times} $\Tbf_X = (T_X(x), x \in \check{\Xmc})$ are the collection of $\check{n}$ random absorption times starting from each possible initial transient state, with $T_X(x) = \min\{t \in \Nbb : X(t) \in \hat{\Xmc}|X(0) = x)$.
\end{definition}

\begin{theorem} (\cite{KemSne1983})
\label{thm:KemSne}
The absorption times $\Tbf_X$ have means $\boldsymbol\mu_X$ and variances $\boldsymbol\sigma^2_X$:
\begin{equation}
\boldsymbol\mu_X = N_X \mathbf{1}_{\check{n}}, ~
\boldsymbol\sigma^2_X = (2 N_X - I_{\check{n}}) \boldsymbol\mu_X - (\boldsymbol\mu^{\Tsf}_X \boldsymbol\mu_X) \mathbf{1}_{\check{n}},
\end{equation}
where $\mathbf{1}_{\check{n}}$ is the $\check{n}$-vector of ones.
\end{theorem}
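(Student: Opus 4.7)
The plan is to invoke the standard first-step analysis for Markov chains: condition on the state at time $t=1$ to derive linear recurrences in $\boldsymbol{\mu}_X$ and in the vector of second moments $\boldsymbol{\eta}_X = (E[T_X(x)^2] : x \in \check{\Xmc})$, then invert $I_{\check{n}}-Q$ via the fundamental matrix $N_X$ of \defref{fundmatabtime}.

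First I would establish the mean formula. Starting from transient state $x$, the walk takes one step and either lands in $\hat{\Xmc}$ (so $T_X(x)=1$) or in some transient $y$ (after which the remaining time is an independent copy of $T_X(y)$). Conditioning gives
\begin{equation}
\mu_X(x) = \sum_{y \in \hat{\Xmc}} R_{xy} + \sum_{y \in \check{\Xmc}} Q_{xy}(1 + \mu_X(y)),
\end{equation}
and the row-stochasticity identity $\sum_{y \in \hat{\Xmc}} R_{xy} + \sum_{y \in \check{\Xmc}} Q_{xy} = 1$ collapses this to $\boldsymbol{\mu}_X = \mathbf{1}_{\check{n}} + Q\boldsymbol{\mu}_X$. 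Invertibility of $I_{\check{n}}-Q$ (which holds because absorption has probability one from every transient state, forcing the spectral radius of $Q$ to be strictly less than one) delivers $\boldsymbol{\mu}_X = N_X \mathbf{1}_{\check{n}}$.

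Next, applying the same first-step decomposition to $T_X(x)^2$ and expanding $(1+T_X(y))^2$ (using the Markov property to treat the post-step time as independent of the first step, so that $E[(1+T_X(y))^2] = 1 + 2\mu_X(y) + \eta_X(y)$) yields
\begin{equation}
\boldsymbol{\eta}_X = \mathbf{1}_{\check{n}} + 2 Q \boldsymbol{\mu}_X + Q \boldsymbol{\eta}_X.
\end{equation}
Solving for $\boldsymbol{\eta}_X$ and using the algebraic identity $N_X Q = N_X - I_{\check{n}}$ (immediate from $N_X(I_{\check{n}}-Q) = I_{\check{n}}$) produces $\boldsymbol{\eta}_X = N_X \mathbf{1}_{\check{n}} + 2(N_X - I_{\check{n}})\boldsymbol{\mu}_X = (2N_X - I_{\check{n}})\boldsymbol{\mu}_X$. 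Subtracting the entrywise square of $\boldsymbol{\mu}_X$ from $\boldsymbol{\eta}_X$ then gives the variance formula in the statement (the final term $(\boldsymbol{\mu}_X^{\Tsf}\boldsymbol{\mu}_X)\mathbf{1}_{\check{n}}$ being understood as the entrywise square of $\boldsymbol{\mu}_X$, i.e.\ the Hadamard product $\boldsymbol{\mu}_X$ with itself).

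I do not anticipate any genuine obstacle; this is a classical consequence of the fundamental-matrix representation, and the entire argument is a short exercise in first-step analysis plus the identity $N_X Q = N_X - I_{\check{n}}$. The one point requiring mild care is the invertibility of $I_{\check{n}} - Q$, which implicitly assumes every transient state can reach $\hat{\Xmc}$; under that hypothesis the Neumann series $\sum_{k \ge 0} Q^k$ converges to $N_X$, and the rest of the derivation proceeds mechanically.
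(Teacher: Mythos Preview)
Your argument is correct and is the standard first-step derivation found in Kemeny--Snell; the paper itself does not supply a proof but merely cites \cite{KemSne1983}, so there is nothing to compare against beyond noting that your route is precisely the classical one. Your parenthetical remark that the term written as $(\boldsymbol\mu_X^{\Tsf}\boldsymbol\mu_X)\mathbf{1}_{\check{n}}$ must be read as the entrywise (Hadamard) square of $\boldsymbol\mu_X$ is also well taken, since the literal inner-product reading would give every transient state the same variance correction, which is wrong.
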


Let $\check{\pbf}_X = (\check{p}_X(x), x \in \check{\Xmc})$ be the initial distribution of $X$ on the transient states $\check{\Xmc}$, i.e., $X(0) \sim \check{\pbf}_X$, and define $T_X$ as the corresponding random absorption time.

\begin{corollary}
\label{cor:KemSne}
Given the initial distribution $\check{\pbf}_X$ for $X(0)$ on $\check{\Xmc}$, the resulting absorption time $T_X$ has
\begin{equation}
\label{eq:etxvartx}
\Ebb[T_X] = \sum_{x \in \check{\Xmc}} \check{p}_X(x) \mu_X(x), ~
\mathrm{Var}(T_X) = \sum_{x \in \check{\Xmc}} \check{p}_X(x) \sigma^2_X(x).
\end{equation}
\end{corollary}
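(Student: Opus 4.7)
The plan is to derive both identities by conditioning on the random starting state $X(0) \sim \check{\pbf}_X$ and invoking the per-starting-state moments $\mu_X(x)$ and $\sigma^2_X(x)$ supplied by \thmref{KemSne}. The key observation is that, given $X(0) = x$, the distribution of $T_X$ coincides with that of $T_X(x)$, so that $\Ebb[T_X \mid X(0)=x] = \mu_X(x)$ and $\mathrm{Var}(T_X \mid X(0)=x) = \sigma^2_X(x)$.

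For the mean, the argument is a one-line application of the tower property: $\Ebb[T_X] = \Ebb[\Ebb[T_X \mid X(0)]] = \sum_{x \in \check{\Xmc}} \check{p}_X(x)\,\mu_X(x)$, using that $X(0)$ is discrete with probability mass function $\check{\pbf}_X$ on $\check{\Xmc}$ and the conditional mean supplied above. No additional machinery is required beyond \thmref{KemSne}.

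For the variance I would apply the law of total variance,
\[
\mathrm{Var}(T_X) = \Ebb[\mathrm{Var}(T_X \mid X(0))] + \mathrm{Var}(\Ebb[T_X \mid X(0)]).
\]
The first term reduces immediately to $\sum_{x} \check{p}_X(x)\,\sigma^2_X(x)$, matching the right-hand side of \eqref{etxvartx} via the same conditioning argument. The second term, $\mathrm{Var}(\mu_X(X(0)))$, is the main obstacle: it vanishes exactly when $\check{\pbf}_X$ is a point mass on a single transient state, or, more generally, when $\mu_X$ is constant on the support of $\check{\pbf}_X$. To land the statement as written, I would either interpret $\mathrm{Var}(T_X)$ in \eqref{etxvartx} as the expected conditional variance $\Ebb[\mathrm{Var}(T_X \mid X(0))]$, or restrict to degenerate initial distributions (the case relevant to the single-source search times appearing later in the paper); a fully general unconditional variance would need to be augmented by the mixture correction $\sum_x \check{p}_X(x)\bigl(\mu_X(x) - \Ebb[T_X]\bigr)^2$. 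I expect flagging and resolving this interpretive point to be the only nontrivial step; the remainder is bookkeeping against the p.m.f.\ $\check{\pbf}_X$.
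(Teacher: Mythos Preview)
The paper states \corref{KemSne} without proof, so there is no argument to compare yours against. Your derivation of the mean via the tower property is correct and is the obvious intended route.

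Your analysis of the variance is also correct, and in fact you have identified a genuine slip in the stated corollary: for a nondegenerate initial distribution the law of total variance gives
\[
\mathrm{Var}(T_X) \;=\; \sum_{x \in \check{\Xmc}} \check{p}_X(x)\,\sigma^2_X(x) \;+\; \sum_{x \in \check{\Xmc}} \check{p}_X(x)\bigl(\mu_X(x) - \Ebb[T_X]\bigr)^2,
\]
and the second summand does not vanish unless $\mu_X(\cdot)$ is constant on the support of $\check{\pbf}_X$. The expression in \eqref{etxvartx} is only the expected conditional variance. Your proposed resolutions (interpret the right-hand side as $\Ebb[\mathrm{Var}(T_X \mid X(0))]$, or restrict to a point-mass start) are the right ways to read the statement; since the paper uses these quantities only as summary statistics in plots, the distinction has no downstream technical consequences there, but as a standalone corollary the variance identity is misstated and your flag is warranted.
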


\section{Biased random walk on the graph}
\label{sec:gm}

Denote the undirected graph on which the search takes place as $\Gmc = (\Vmc,\Emc)$, with $\Vmc$ the set of vertices and $\Emc$ the set of undirected edges.  The order and size of $\Gmc$ are $|\Vmc|=n$ and $|\Emc|=m$, respectively.  The neighborhood of any node $v$ is denoted $\Gamma(v) = \{ u \in \Vmc : \{u,v\} \in \Emc\}$, and the degree is $d(v) = |\Gamma(v)|$.  The set of degrees found in the graph is $\Dmc = \bigcup_{v \in \Vmc} d(v)$, the number of distinct degrees is $\delta = |\Dmc|$.  The nodes $\Vmc$ are partitioned into subsets $(\Vmc_k, k \in \Dmc)$ with $\Vmc_k = \{ v \in \Vmc : d(v) = k\}$ the set of nodes of degree $k$, and the resulting degree distribution is $\pbf_{\Dmc} = (p_{\Dmc}(k), k \in \Dmc)$, with entries $p_{\Dmc}(k) = \frac{|\Vmc_k|}{n}$.  In particular, the max degree found in the graph is $d_{\rm max} = \max \Dmc$, and the set of nodes with max degree is $\Vmc_{\rm max} = \Vmc_{d_{\rm max}}$.  An absorbing biased random walk on $\Gmc$ is defined as follows, where the notation is adopted from the general case presented in \secref{KemSne}.

\begin{definition}
\label{def:dtmcgraph}
The absorbing discrete-time Markov chain (DTMC) $V = (V(t), t \in \Nbb)$ on $\Gmc$ with bias parameter $\beta \geq 0$ has states $\Vmc$, partitioned into absorbing states $\hat{\Vmc} = \Vmc_{\rm max}$ and transient states $\check{\Vmc} = \Vmc \setminus \Vmc_{\rm max}$. The $n \times n$ transition probability matrix $P_V$ has entries $P_V(u,v) = \Pbb(V(t+1) = v|V(t)=u)$.  For $u \in \check{\Vmc}$ we set $P_V(u,v) = \frac{d(v)^{\beta}}{\sum_{v' \in \Gamma(u)} d(v')^{\beta}}$ for $v \in \Gamma(u)$ and $P_V(u,v) = 0$ else.  For $u \in \hat{\Vmc}$: $P_V(u,u) = 1$ and $P_V(u,v) = 0$ for $v \neq u$.  The absorption times from each initial transient state are $\Tbf_V = (T_V(v), v \in \check{\Vmc})$, with $T_V(v) = \min\{t \in \Nbb : V(t) \in \hat{\Vmc} | V(0) = v\}$.  Note the event $V(t) \in \hat{\Vmc}$ is equivalent to $d(V(t)) = d_{\rm max}$.
\end{definition}

This random search for $\Vmc_{\rm max}$ uses a transition probability that is biased towards selecting higher degree neighbors of the current node, as introduced by Cooper \cite{CooRad2014}, with the bias monotonically increasing in $\beta$.  For $\beta = 0$ the next step is uniform among all neighbors of $u$, while as $\beta \to \infty$ the next step is uniform among the maximum degree neighbors of $u$, i.e., $\Gamma_{\rm max}(u) = \argmax_{v \in \Gamma(u)} d(v)$.  A key goal of this paper is to characterize $\Ebb[T_V]$ and $\mathrm{Var}(T_V)$ as a function of the initial distribution $\check{p}_V$ on $\check{\Vmc}$ and on the bias parameter $\beta$.  Although numerical estimates can (and will) be obtained by simulating the random walk, for large graphs (large $n$) it is computationally infeasible to compute $\Ebb[T_V]$ and $\mathrm{Var}(T_V)$ analytically using \corref{KemSne}, due to the need to compute $N_V$ as the inverse of the (large) $\check{n} \times \check{n}$ matrix $I_{\check{n}}-Q$.  This difficulty motivates us to define an approximation of the biased random walk using a significantly smaller state space, discussed next.

\section{Approximate biased random walk}
\label{sec:approx}

In this section we develop our key approximation giving a more computationally feasible means to estimate the mean and standard deviation of $T_V$.  Our derivation consists of the following steps: $i)$ define the joint degree matrix $J$ and the conditional degree distribution matrix $\tilde{J}$ for $\Gmc$, $ii)$ define the biased walk degree transition probability distribution $\breve{\pbf}(\nbf)$ and  average biased walk degree transition matrix $\bar{P}$, and $iii)$ define the approximate biased walk degree transition matrix $\tilde{P}$.  

\subsection{Joint degree and conditional degree distribution matrices}

We define the joint degree matrix $J$ and the conditional degree distribution matrix $\tilde{J}$ for $\Gmc$.

\begin{definition} 
\label{def:jointdegreematrix} 
The joint degree matrix $J$ is the $\delta \times \delta$ symmetric matrix with entries $J(k,l) = \#\{\{u,v\} \in \Emc : \{d(u),d(v)\} = \{k,l\}\}$, the number of edges in $\Gmc$ with endpoints of degrees $k$ and $l$, and entries $J(k,k)$ twice the number of edges with endpoints both of degree $k$, for $(k,l) \in \Dmc^2$.
\end{definition}

\begin{example} 
\label{exa:jointdegreematrix} 
The joint degree matrix $J$ for the graph $\Gmc$ shown in \figref{jointdegreematrix} (left) is found by grouping the edges $\Emc$ by the degrees of the endpoints, as shown in \figref{jointdegreematrix} (right): 
\begin{equation}
\label{eq:jointdegreematrix}
J =\kbordermatrix{
  & 1 & 2 & 3 & 4 \\
1 &   &   &   & 1 \\
2 &   &   & 1 & 3 \\
3 &   & 1 &   & 2 \\
4 & 1 & 3 & 2 & 2 
}.
\end{equation}
\end{example}

\begin{figure}[!ht]
\centering
\includegraphics[width=\columnwidth]{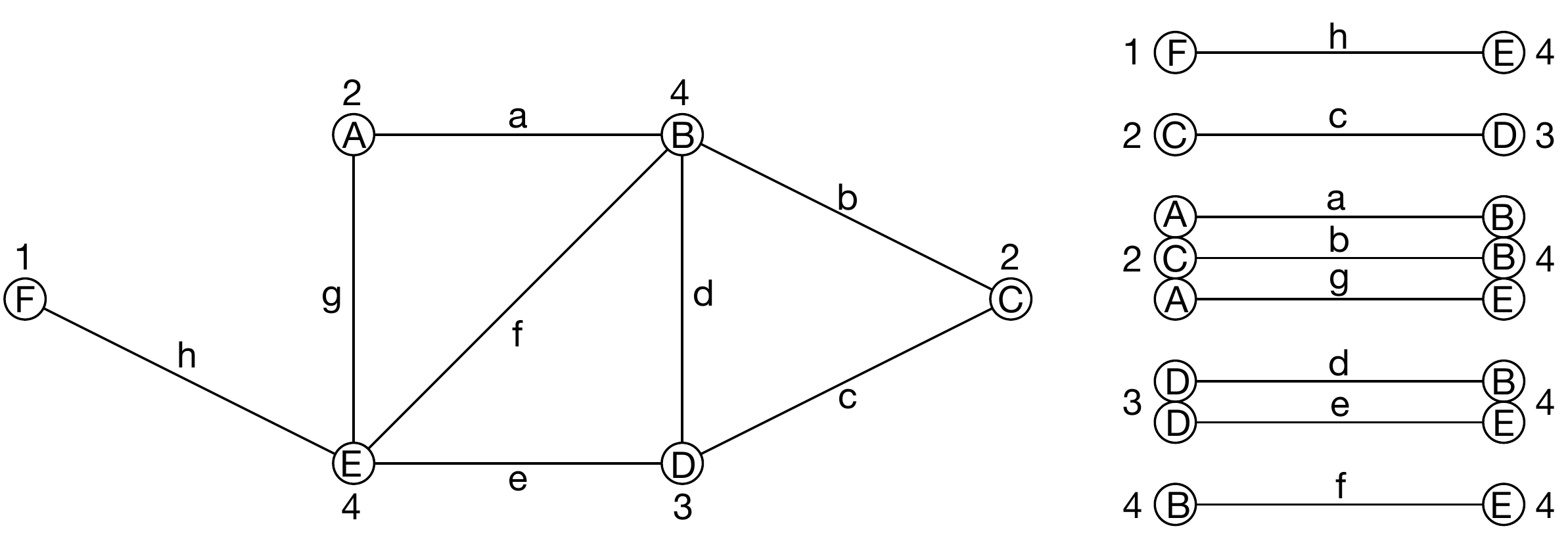}
\caption{Graph $\Gmc$ (left) with edges $\Emc$ grouped by the endpoint degrees (right).}
\label{fig:jointdegreematrix}
\end{figure}

Observe $\sum_{k,l} J(k,l) = 2 m$, and that if an edge from $\Emc$ is selected uniformly at random then the probability that it should have endpoints $\{k,l\}$ is given by $J(k,l)/m$ for $k \neq l$, and $J(k,k)/(2m)$ else.  The conditional degree distribution matrix is obtained by normalizing each row of $J$ into a probability distribution.

\begin{definition}
\label{def:conddegreematrix}
The conditional degree distribution matrix $\tilde{J}$ is the $\delta \times \delta$ matrix with $\tilde{J}(k,l) = J(k,l)/\sum_{l' \in \Dmc} J(k,l')$, for $(k,l) \in \Dmc^2$.
\end{definition}

It is important to motivate $\tilde{J}$ for what follows.  Towards this end we define the following.  First, the set of degrees $\Dmc(v) \subseteq \Dmc$ of some $v \in \Vmc$ is formed by taking the union of the degrees of each neighbor of $v$, i.e., $\Dmc(v) = \bigcup_{u \in \Gamma(v)} d(u)$.  The number of neighbors of some $v \in \Vmc$ of each different degree $\Dmc(v)$, which we term the {\em degree neighborhood of $v$}, is given by $\nbf(v) = (n_l(v), l \in \Dmc(v))$, with entries $n_l(v) = \#\{u \in \Gamma(v) : d(u) = l\}$.  

\begin{proposition}
\label{prp:conddegdistave}
The entries $\tilde{J}(k,l)$ of the conditional degree distribution matrix in \defref{conddegreematrix} are the averages of the fraction of neighbors of degree $l$ over all degree $k$ nodes:
\begin{equation}
\tilde{J}(k,l) = \frac{1}{|\Vmc_k|} \sum_{v \in \Vmc_k} \frac{n_l(v)}{k}.
\end{equation}
\end{proposition}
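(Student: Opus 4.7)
The plan is to verify the identity by a straightforward double-counting argument, being careful about the diagonal convention in the definition of $J$. The key observation is that the denominator $\sum_{l' \in \Dmc} J(k,l')$ appearing in $\tilde{J}(k,l)$ equals the total number of edge-endpoints incident to nodes of degree $k$, which is $k|\Vmc_k|$, while the numerator $J(k,l)$ equals the sum $\sum_{v \in \Vmc_k} n_l(v)$. Dividing then produces exactly the claimed average.

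First I would show that $\sum_{l' \in \Dmc} J(k,l') = k|\Vmc_k|$. For each $l' \neq k$, every edge $\{u,v\}$ with $\{d(u),d(v)\}=\{k,l'\}$ contributes a single endpoint at a degree-$k$ node, and is counted once in $J(k,l')$. For $l' = k$, every edge with both endpoints of degree $k$ contributes two endpoints at degree-$k$ nodes, and by the convention in \defref{jointdegreematrix} is counted twice in $J(k,k)$. Thus $\sum_{l'} J(k,l')$ counts each edge-endpoint incident to a degree-$k$ vertex exactly once, which is $\sum_{v \in \Vmc_k} d(v) = k|\Vmc_k|$.

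Next I would show that $\sum_{v \in \Vmc_k} n_l(v) = J(k,l)$ by the same double-counting logic applied from the vertex side. If $l \neq k$, each edge $\{u,v\}$ with $d(u)=k, d(v)=l$ contributes $1$ to $n_l(u)$ for the unique endpoint $u \in \Vmc_k$, so the sum equals the number of such edges, which is $J(k,l)$. If $l = k$, each edge with both endpoints of degree $k$ contributes $1$ to $n_k$ at each endpoint, giving a total contribution of $2$, which matches the doubled convention $J(k,k)$. In either case the totals agree.

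Combining the two calculations,
\begin{equation}
\frac{1}{|\Vmc_k|}\sum_{v \in \Vmc_k} \frac{n_l(v)}{k} \;=\; \frac{1}{k|\Vmc_k|}\sum_{v \in \Vmc_k} n_l(v) \;=\; \frac{J(k,l)}{\sum_{l' \in \Dmc} J(k,l')} \;=\; \tilde{J}(k,l),
\end{equation}
which is the claim. The only subtle point, and the one place where the proof could silently go wrong, is the factor-of-two convention in $J(k,k)$; the definitions are arranged precisely so that the $l=k$ and $l\neq k$ cases collapse into a single uniform formula, and the proof should make this explicit rather than sweep it under the rug.
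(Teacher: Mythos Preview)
Your proof is correct and follows essentially the same approach as the paper: both establish $J(k,l) = \sum_{v \in \Vmc_k} n_l(v)$ and $\sum_{l'} J(k,l') = k|\Vmc_k|$ by double-counting edge-endpoints, then divide. Your treatment is in fact more careful than the paper's in explicitly handling the $l=k$ diagonal convention, which the paper's one-line proof leaves implicit.
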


\begin{proof} 
Observe $J(k,l) = \sum_{v \in \Vmc_k} n_l(v)$ by partitioning all edges with a degree $k$ endpoint by the degree of the other endpoint, and $\sum_{l' \in \Dmc} J(k,l') = k |\Vmc_k|$ since the sum is the number of edges with a degree $k$ endpoint.  
\end{proof}

Row $k$ of $\tilde{J}$ is denoted $\tilde{J}(k,\cdot) = (\tilde{J}(k,l), l \in \Dmc)$.  The locations of the non-zero entries of $\tilde{J}(k,\cdot)$ are denoted $\Dmc_k = \bigcup_{v \in \Vmc_k} \Dmc(v)$, i.e., $l \in \Dmc_k$ means there exists some $\{u,v\} \in \Emc$ with $d(u)=k$ and $d(v) = l$.  

\subsection{Biased walk degree transition probability distribution}

We define the biased walk degree transition probability distribution $\breve{\pbf}(\nbf)$ and average biased walk degree transition matrix $\bar{P}$.  

\begin{definition} 
The biased random walk degree transition probability distribution $\breve{\pbf}(\nbf) = (\breve{p}_l(\nbf), l \in \Dmc)$ from a node with degree neighborhood $\nbf = (n_l, l \in \Dmc)$ (with $\sum_{l \in \Dmc} n_l \in \Dmc$) is 
\begin{equation}
\breve{p}_l(\nbf) = \frac{n_l l^{\beta}}{\sum_{l' \in \Dmc} n_{l'} l'^{\beta}}.
\end{equation}
\end{definition}

The following is an immediate result of \defref{dtmcgraph}.

\begin{proposition}
\label{prop:degtransprob}
The biased random walk $V$ in \defref{dtmcgraph} obeys the degree transition probability 
\begin{equation}
\Pbb(d(V(t+1))=l|V(t)=v) = \breve{p}_l(\nbf(v)), ~ l \in \Dmc(v).
\end{equation}
\end{proposition}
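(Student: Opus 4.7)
The plan is to derive the claimed degree-transition probability directly from the one-step transition kernel $P_V$ in Definition~\ref{def:dtmcgraph} by partitioning the neighborhood of $v$ according to the degrees of its neighbors, and then regrouping the terms in the sum that defines $\Pbb(d(V(t+1))=l\mid V(t)=v)$.

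First, I would write
\begin{equation*}
\Pbb(d(V(t+1))=l\mid V(t)=v) \;=\; \sum_{u \in \Vmc:\, d(u)=l} P_V(v,u),
\end{equation*}
and then invoke Definition~\ref{def:dtmcgraph} to kill all terms with $u \notin \Gamma(v)$ (since $P_V(v,u)=0$ there), leaving a sum only over $u \in \Gamma(v)$ with $d(u)=l$. Substituting the explicit formula $P_V(v,u)=d(u)^\beta / \sum_{v' \in \Gamma(v)} d(v')^\beta$ gives
\begin{equation*}
\Pbb(d(V(t+1))=l\mid V(t)=v) \;=\; \frac{\sum_{u \in \Gamma(v):\, d(u)=l} d(u)^\beta}{\sum_{v' \in \Gamma(v)} d(v')^\beta}.
\end{equation*}

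Second, I would simplify the numerator by pulling $d(u)^\beta = l^\beta$ out of the sum, so that the count of remaining terms is exactly $n_l(v) = \#\{u \in \Gamma(v): d(u)=l\}$, yielding numerator $n_l(v) \, l^\beta$. For the denominator, I would partition $\Gamma(v)$ by neighbor degree, obtaining $\sum_{v' \in \Gamma(v)} d(v')^\beta = \sum_{l' \in \Dmc(v)} n_{l'}(v)\, l'^\beta$, which can equivalently be written as a sum over $l' \in \Dmc$ since $n_{l'}(v)=0$ for $l' \notin \Dmc(v)$. Matching this ratio against the definition of $\breve{p}_l(\nbf(v))$ closes the proof. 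One also needs to note that the condition $l \in \Dmc(v)$ ensures $n_l(v) \geq 1$ so the expression is nonzero exactly on the stated support, and $v \in \check\Vmc$ is the relevant case; for $v \in \hat\Vmc$ the walk is absorbed and the statement is vacuous.

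There is no real obstacle here, since the proposition is essentially a bookkeeping step: the only content is the observation that summing the biased transition kernel $P_V(v,\cdot)$ over the level set $\{u : d(u)=l\}\cap\Gamma(v)$ reduces both numerator and denominator to their degree-binned forms, which is exactly $\breve{p}_l(\nbf(v))$. The main value of stating it explicitly is to justify the subsequent reduction to a degree-indexed Markov chain in the rest of Section~\ref{sec:approx}.
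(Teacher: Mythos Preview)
Your argument is correct and is precisely the routine verification the paper has in mind: the paper itself offers no proof beyond stating that the proposition ``is an immediate result of \defref{dtmcgraph}.'' Your explicit partitioning of $\Gamma(v)$ by neighbor degree and matching of numerator and denominator to the definition of $\breve{p}_l(\nbf(v))$ is exactly the intended one-line check.
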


The above proposition states that the biased random walk $V$ degree transition probability distribution has the property that the probability of the degree of the next node of the biased random walk depends upon the current node $v$ only through $\nbf(v)$.  The next definition gives an average transition probability from nodes of degree $k$ to nodes of degree $l$ under the biased random walk.

\begin{definition}
The average biased walk degree transition matrix $\bar{P}_V$ is the $\delta \times \delta$ matrix with entries 
\begin{equation}
\bar{P}_V(k,l) = \frac{1}{|\Vmc_k|} \sum_{v \in \Vmc_k} \breve{p}_l(\nbf(v)), ~ (k,l) \in \Dmc^2
\end{equation}
giving the average probability of transitioning to a node of degree $l$ over all starting nodes of degree $k$.
\end{definition}

A key point, developed below, is that $\bar{P}_V$ is of size $\delta \times \delta$ whereas the transition matrix $P_V$ is of (potentially) significantly larger size $n \times n$.  

\subsection{Approximate biased walk degree transition matrix}

A random $\delta$-vector $\Nbf = (N_l, l \in \Dmc)$ has a {\em multinomial distribution} with parameters $(k,\pbf)$, denoted $\Nbf \sim \mathrm{mult}(k,\pbf)$, if 
\begin{equation}
\Pbb(\Nbf = \nbf) = \binom{k}{\nbf} \prod_{l \in \Dmc} p_l^{n_l}, ~ \nbf \in \Nmc_k,
\end{equation}
where $\binom{k}{\nbf} = \binom{k}{\prod_{l \in \Dmc} n_l!}$ is the multinomial coefficient. Here, $\Nbf$ has support $\Nmc_k = \{ \nbf \in \Nbb^{\delta} : \sum_{l \in \Dmc} n_l = k\}$, defined as the set of all possible $\delta$-vectors from $\Nbb^{\delta}$ that sum to $k$.  

The approximate biased walk degree transition matrix $\tilde{P}_W$ has entries defined using the expectation of the biased random walk degree transition probability distribution $\breve{\pbf}_l$ when the degree neighborhood $\nbf$ is taken as a random multinomial vector $\Nbf$ with parameters $(k,\tilde{J}(k,\cdot))$.  

\begin{definition}
\label{def:abwdtm}
The approximate biased walk degree transition matrix $\tilde{P}_W$ is the $\delta \times \delta$ matrix with entries 
\begin{equation}
\tilde{P}_W(k,l) = \Ebb \left[ \breve{p}_l(\Nbf) \right], ~ \Nbf \sim \mathrm{mult}(k,\tilde{J}(k,\cdot)).
\end{equation}
That is:
\begin{equation}
\label{eq:approxtransmatrix}
\tilde{P}_W(k,l) = \sum_{\nbf \in \Nmc_k} \binom{k}{\nbf} \prod_{l \in \Dmc_k} \tilde{J}(k,l)^{n_l} \frac{n_l l^{\beta}}{\sum_{l' \in \Dmc_k} n_{l'} l'^{\beta}}.
\end{equation}
\end{definition}

Our model is suitable for those graphs for which $\bar{P}_V(k,l) \approx \tilde{P}_W(k,l)$ for each $(k,l) \in \Dmc^2$; as will be shown in \secref{results}, it is not difficult to identify graphs for which the model works, and to find graphs for which it does not.  Finally, we define a biased random walk on $\Dmc$.

\begin{definition}
\label{def:dtmcdeg}
The absorbing discrete-time Markov chain (DTMC) $W = (W(t), t \in \Nbb)$ on $\Dmc$ with bias parameter $\beta \geq 0$ has states $\Dmc$, partitioned into absorbing states $\hat{\Dmc} = \{d_{\rm max}\}$ and transient states $\check{\Dmc} = \Dmc \setminus d_{\rm max}$.  The $\delta \times \delta$ transition probability matrix $\tilde{P}_W$ is given in \defref{abwdtm}.  The absorption times from each initial transient state are $\Tbf_W = (T_W(l), l \in \check{\Dmc})$, with $T_W(l) = \min\{t \in \Nbb : W(t) = d_{\rm max} | W(0) = l\}$.  
\end{definition}

Recall that for large graphs (large $n$) it is not possible to compute $\Ebb[T_V]$ and $\mathrm{Var}(T_V)$ for $T_V$ in \defref{dtmcgraph} using \corref{KemSne} since we cannot obtain the fundamental matrix $N_V = (I_{\check{n}} - Q)^{-1}$, on account of the difficulty of inverting the (large) $\check{n} \times \check{n}$ matrix $I_{\check{n}}-Q$.  In contrast, for graphs with bounded $\delta$, it {\em is} possible to compute $\Ebb[T_W]$ and $\mathrm{Var}(T_W)$ for $T_W$ in \defref{dtmcdeg} using \corref{KemSne} since the corresponding fundamental matrix $N_W = (I_{\delta-1}-Q)^{-1}$ is obtained by inverting the (smaller) $(\delta-1)\times(\delta-1)$ matrix $I_{\delta-1}-Q$. 

Our model's approximation $T_V \approx T_W$ lies with the required approximation $\bar{P}_V \approx \tilde{P}_W$ mentioned above.  That is, $i)$ although the biased random walk on the graph has a probability of transitioning to nodes of each degree $l \in \Dmc(v)$ from a node $v$ of degree $k$ that depends upon the exact degree neighborhood $\nbf(v)$ (i.e., $\breve{p}_l(\nbf(v))$), $ii)$ we approximate this probability using the expectation with respect to a {\em random} degree neighborhood $\Nbf$, drawn with parameters $k$ and $\tilde{J}(k,\cdot)$. This distribution $\tilde{J}(k,\cdot)$ is the average distribution of the number of nodes of each degree over all degree $k$ nodes (\prpref{conddegdistave}). 

\section{Results}
\label{sec:results}

The experimentation framework for this work is written using the igraph Python
library \cite{CsaNep2006}.  We use this framework to generate instances of a
graph family, then for each graph we measure the random times to find a maximum degree node by $i)$ a biased random walk (BRW), and $ii)$ random sampling.  Unless noted otherwise, the mean $\Ebb[T_V]$ and standard deviation $\mathrm{Std}[T_V]$ of the absorption of a BRW on a graph is calculated from $500$ trials for each tested bias coefficient $\beta$.  We compare the empirical mean $\Ebb[T_V]$ and empirical standard deviation $\mathrm{Std}[T_V]$ measured from the BRW on the graph with the numerical mean $\Ebb[T_{W}]$ and standard deviation $\mathrm{Std}[T_{W}]$ of a BRW on the Markov chain of the graph's degree states.  \tabref{g_param} describes the Erd\H{o}s-R\'{e}nyi (ER) graphs we used.
\begin{table}[h]
\centering
\begin{tabular}{|c|c|c|c|}
\hline
Graph 					& Parameters 	& Size	& $d_{\rm max}$ \\ 	\hline	
Erd\H{o}s-R\'{e}nyi (ER)& $p=0.05$		& 100	& 11 \\	\hline	
Erd\H{o}s-R\'{e}nyi (ER)& $p=0.0024$	& 1121	& 10 \\	
Erd\H{o}s-R\'{e}nyi (ER)& $p=0.002569$	& 1011	& 11 \\	
\hline
\end{tabular}
\caption{Parameters of the graphs used in the simulations.}
\label{tab:g_param}
\end{table}

As currently implemented our model has a significant computational limitation in that the approximate random walk $(W(t))$ in \defref{dtmcdeg} requires computing the matrix $\tilde{P}_W$ in \eqref{approxtransmatrix}, and each such entry $\tilde{P}_W(k,l)$ requires summing over all $\nbf \in \Nmc_k$.  As the size of $|\Nmc_k|$ grows exponentially in $k$, we are unable to compute it for graphs with $d_{\rm max} > 15$; this is the motivation behind our selecting $(n,p)$ pairs for the ER graphs so that $d_{\rm max}$ is small.  Addressing this deficiency is the subject of our ongoing and future work.

\subsection{Erd\H{o}s-R\'{e}nyi (ER) Graphs}

The Erd\H{o}s-R\'{e}nyi graph \cite{ErdRen1959} $G(n,p)$ is a family of random graphs with $n$ nodes and each of the $\binom{n}{2}$ possible edges is added independently with probability $p$.  The resulting graph has a binomial degree distribution, $p_{\Dmc}(k) \sim \mathrm{bin}(n-1,p)$.  To investigate larger graphs we must ensure the graph is constructed so as to have a bounded expected maximum degree, on account of the computational limitation $d_{\rm max} \leq 15$ discussed above.  The following proposition gives an upper bound on the expected maximum of $n$ iid random variables:
\begin{proposition}
\label{prp:expectedmax}
Let $(Y_1,\ldots,Y_n)$ be iid with moment generating function (MGF) $\phi(t) = \Ebb[\erm^{tY}]$ and let $Y_{\rm max} = \max (X_1,\ldots,X_n)$.  Then 
$\Ebb[Y_{\rm max}] \leq \frac{1}{t}\log(n\phi(t))$.
\end{proposition}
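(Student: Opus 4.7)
The plan is to prove this classical bound using two standard ingredients: the trivial fact that a maximum of non-negative numbers is dominated by their sum, and Jensen's inequality applied to the convex exponential function. Throughout we take $t>0$ so that $x \mapsto \mathrm{e}^{tx}$ is monotone increasing and convex.

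First, I would use the monotonicity of $x \mapsto \mathrm{e}^{tx}$ for $t>0$ to move the max inside the exponential: $\mathrm{e}^{t Y_{\max}} = \max_{i=1,\ldots,n} \mathrm{e}^{t Y_i}$. Since each term is non-negative, the max is bounded by the sum, giving $\mathrm{e}^{t Y_{\max}} \leq \sum_{i=1}^{n} \mathrm{e}^{t Y_i}$. Taking expectations on both sides, using linearity and the iid assumption (so each $\mathbb{E}[\mathrm{e}^{t Y_i}] = \phi(t)$), yields
\begin{equation*}
\mathbb{E}[\mathrm{e}^{t Y_{\max}}] \leq n \phi(t).
\end{equation*}

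Second, I would apply Jensen's inequality to the convex function $x \mapsto \mathrm{e}^{tx}$ with the random variable $Y_{\max}$, obtaining $\mathrm{e}^{t \mathbb{E}[Y_{\max}]} \leq \mathbb{E}[\mathrm{e}^{t Y_{\max}}]$. Chaining this with the previous step gives $\mathrm{e}^{t \mathbb{E}[Y_{\max}]} \leq n \phi(t)$. Taking logarithms (monotone) and dividing by $t>0$ (preserving the inequality) delivers the stated bound $\mathbb{E}[Y_{\max}] \leq \frac{1}{t}\log(n\phi(t))$.

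There is no real obstacle here; both steps are one-liners. The only subtlety worth flagging is the sign of $t$: the argument requires $t>0$ both to preserve direction when exponentiating the max and when dividing by $t$ at the end. Because the bound holds for every $t>0$, it can be tightened by minimizing over $t$ (this is the usual Chernoff-style optimization), but the proposition only asserts the bound for a generic $t$, so no optimization is needed in the proof itself.
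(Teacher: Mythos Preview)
Your proof is correct and follows exactly the approach the paper indicates: apply Jensen's inequality to obtain $\mathrm{e}^{t\,\mathbb{E}[Y_{\max}]} \leq \mathbb{E}[\mathrm{e}^{tY_{\max}}]$, bound the maximum of non-negative terms by their sum to get $\mathbb{E}[\mathrm{e}^{tY_{\max}}] \leq n\phi(t)$, and then take logarithms. Your remark that the argument requires $t>0$ is a useful clarification the paper leaves implicit.
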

It may be proved by application of Jensen's inequality to establish $\erm^{t \Ebb[Y_{\rm max}]} \leq n \phi(t)$.

We apply the above rule to the random degrees $(D(v), v \in [n])$ of an ER graph $G(n,p)$, each $D(v) \sim \mathrm{bin}(n-1,p)$, which are identically distributed, but not independent; it can be shown that the slight dependence is inessential and the bound applies.  Recall that the binomial distribution $\mathrm{bin}(n,\lambda/n)$ can be approximated as a Poisson distribution $\mathrm{Po}(\lambda)$ in the case when $n$ is large, and that the Poisson MGF is $\phi(t) = \erm^{\lambda(\erm^t-1)}$.  Applying \prpref{expectedmax} to this case yields the following upper bound on the max degree of an ER graph $G(n,\lambda/n)$ 
\begin{equation}
\Ebb[\max(D(v), v \in [n])] \leq \frac{\log(n) - \lambda}{\Wmc (\frac{\log(n)-\lambda}{\erm \lambda})}
\label{eq:up_bound_2}
\end{equation}
where $\Wmc$ is the Lambert $W$ function.  The value of \eqref{up_bound_2} is that it allows us to select $\lambda(n)$ so that the resulting ER graph of order $n$ has a specified expected max degree upper bound.  

However, the ER graph is known to be disconnected with high probability when $p(n) = \lambda/n$ (or smaller) for any $\lambda$, and our random walk is only guaranteed to find a maximum degree node for a connected graph. It is further known that an ER graph with $p(n) = \lambda(\beta)/n$ will have a fraction $\beta$ of the $n$ nodes in the giant connected component where $\lambda(\beta) = -\log(1-\beta)/\beta$ \cite{JanLuc2000}.  Hence the trade-off we face in generating ER graphs is between a large fraction $\beta$ in the giant connected component vs.\  a bounded max degree.  For $n=1090$ and $\lambda = 2.8$ (i.e., $p(n) = \lambda/n = 0.0024569)$, the expected max degree upper bound in \eqref{up_bound_2} is $d_{\rm max} \leq 11.1041$, and $\beta = 0.924975$ obeys $-\log(1-\beta)/\beta = \lambda$, meaning the giant connected component will contain approximately $n \beta = 1008$ nodes.  


\subsection{Results for ER Graphs}


\figref{er1200rvsr} (left) shows the expected absorbtion time $\Ebb[T]$ for an ER graph using both $i)$ the simulated biased random walk (BRW) on the original graph $\Ebb[T_V]$, and $ii)$ the analytically computed $\Ebb[T_W]$ using the reduced state space model, both swept over a range of bias coefficients $\beta$.  The plot shows a significant deviation between the measured quantity and the model prediction.  The failure of the model for this graph may be accounted for by the fact that the ER graph is known to have zero assortativity, i.e., the degrees of the two endpoints of the graph are conditionally independent, and as such the degree of the current node does not provide any substantial information about the proximity of that node to higher degree (and by extension, maximum degree) nodes.  In this sense, our approximation is shown to break down for such graphs, as a central assumption of our model is the idea that the degree of a node {\em does} contain information about its degree neighborhood and all nodes of degree $k$ have similar degree neighborhoods. 

\begin{figure}
\centering
\includegraphics[width=1.7in]{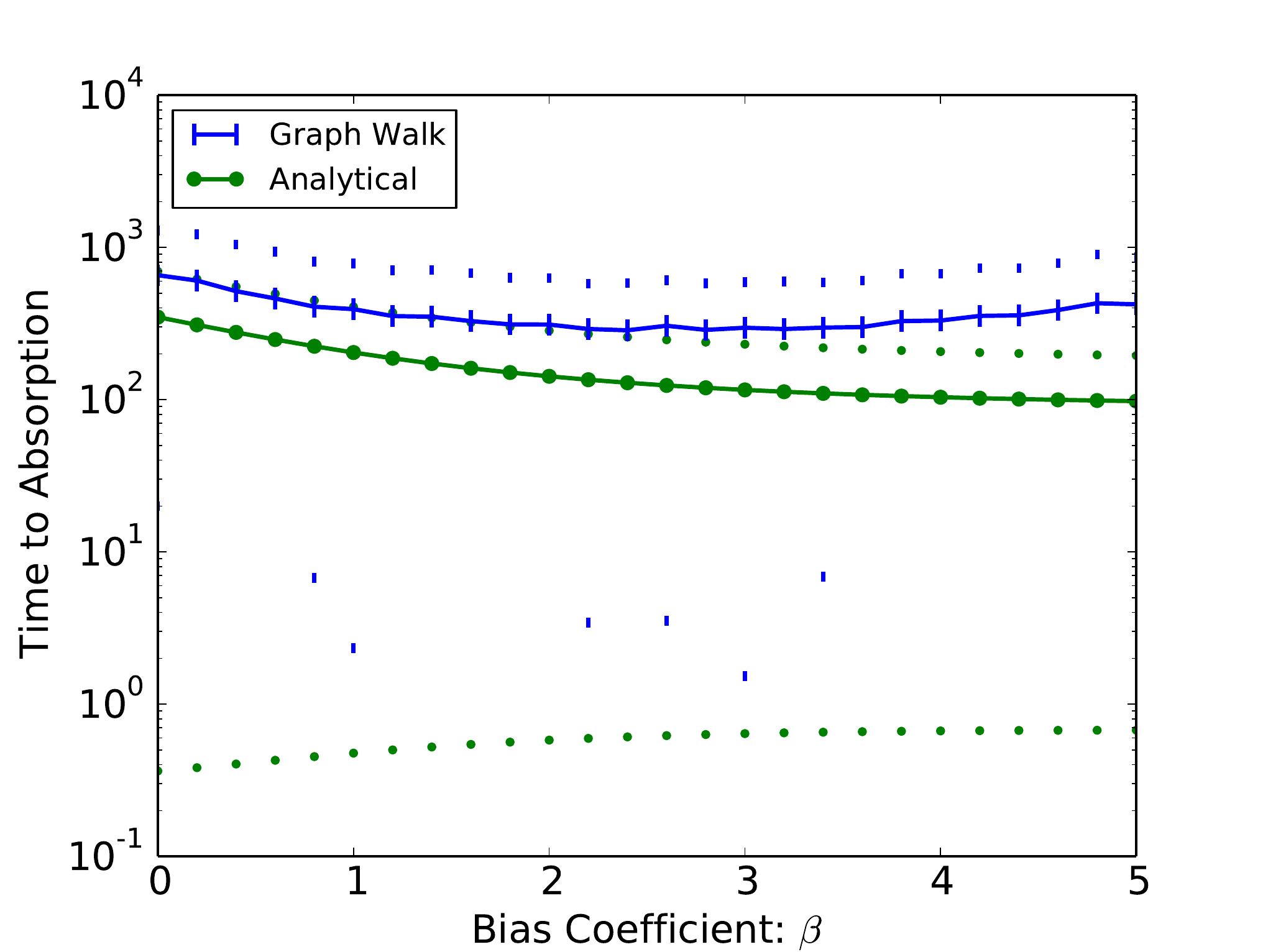}    
\includegraphics[width=1.7in]{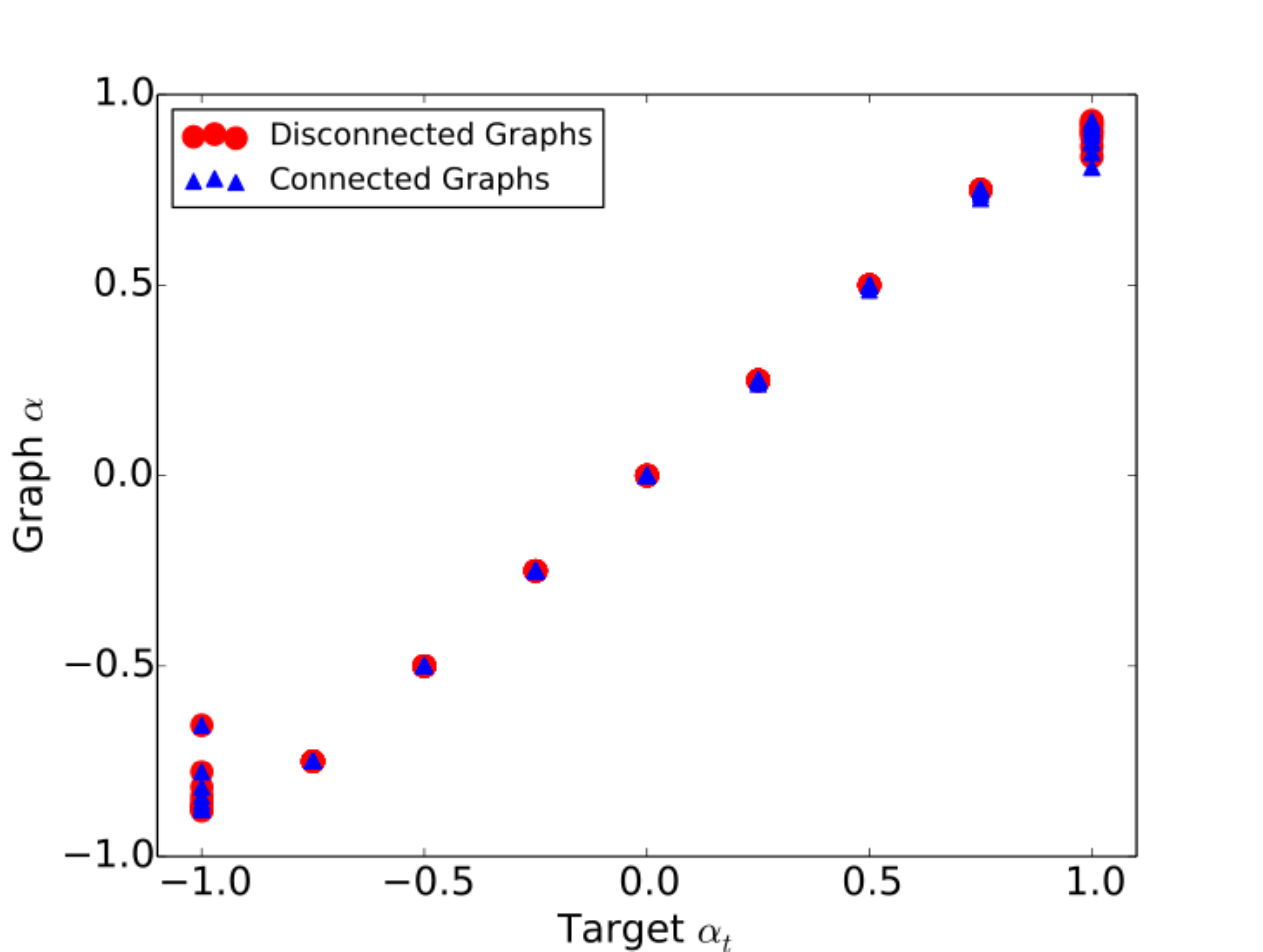}    
\caption{{\bf Left:} average absorbtion time, $\Ebb[T]$ (solid lines), for original graph (blue, via simulation) and model (green, via \eqref{etxvartx}), with $\Ebb[T]\pm\mathrm{Std}[T]$ (dashed lines).  {\bf Right:} output assortativity $\alpha$ as a function of input target assortativity $\alpha_t$ in the random rewiring algorithm.}
\label{fig:er1200rvsr}
\end{figure}

This led to a set of questions: $i)$ is the accuracy of our model of a BRW on a graph dependent upon the graph's assortativity?, $ii)$ are there graphs where BRW finds the max degree nodes faster than random sampling?, and $iii)$ what is the optimal bias coefficient $\beta^*$ for a graph of assortativity $\alpha$? 

\subsection{Graph Re-wiring Algorithm}

Our approach in this paper is to offer some preliminary numerical answers to these questions, using graph assortativity, denoted by $\alpha \in [-1,+1]$, as the independent control parameter.  To construct graphs with a target assortativity, $\alpha_t$, we modified Brunet's rewiring algorithm for increasing or decreasing a graph's assortativity \cite{Xul2005}.  Given an initial graph $G_0$ and a target assortativity $\alpha_t$. We calculate $G_0$'s assortativity, $\alpha_0$ and choose two edges at random $e_1,e_2$. Then we remove $e_1$ and $e_2$ from $G_0$. If two new edges $e_3$ and $e_4$ can be wired between the endpoints of the former edges $e_1$ and $e_2$ without creating self loops or multiple edges such that the assortativity of the new graph $\alpha_1$ is closer to $\alpha_t$ than $\alpha_0$ we add edges $e_3$ and $e_4$, if not we replace $e_1$ and $e_2$. This procedure is repeated until the graph's assortativity is within a suitably small interval around $\alpha_t$. Notice that this procedure preserves the degree distribution of $G_0$, since the degree of the end points of $e_1$ and $e_2$ are unchanged. When the assortativity converges, if the graph is disconnected, then for each disconnected component a random node is selected in the graph's giant component and wired to the smaller component, thereby connecting the graph. The assortativity of the $1011$ node graphs compared to their target assortativity is shown in \figref{er1200rvsr} (right) for both disconnected (red) and connected (blue) graphs. From these results we infer that connecting the graphs in this manner has little effect on their assortativity.

\subsection{Biased Random Walks} 

We carried out Monte carlo simulations of $500$ trials on $10$ graphs with binomial degree distributions of $100$ and $1011$ nodes, while sweeping the bias coefficient $\beta$ of the walk. We compared BRWs with two random sampling algorithms: $i)$ sampling nodes without replacement, denoted 'no-r', and $ii)$ sampling a node and all of its neighbors without replacement, denoted 'no-r n'.  The rationale for these two forms of sampling is in the interest in making a fair comparison in the absorbtion time between the BRW and a random sample.  The BRW algorithm presumes at each step that the search is able to not only view the degree of the current node but also the degree of all neighbors of that node.  Thus any comparison between the performance of, say, $k$ steps of the BRW and $k$ nodes sampled without replacement is unfair to random sampling, since the latter does not see as many nodes as the former.  The second sampling scheme, where at each step we sample a node and see its degree as well as the degrees of its neighbors, offers a more balanced comparison with the performance of the BRW.

For our comparison between the BRW and random sampling we investigated nine sample values of the target assortativity $\alpha_t$, namely the nine values $\{-1.0,-0.75,\ldots,0.75,1.0\}$, and we used both $100$ node and $1000$ node ER graphs.  For each target $\alpha_t$, and each graph rewired to that $\alpha_t$, we swept the biased coefficient $\beta$ over the range $[0,8]$.  

The results for $\alpha_t \in \{+0.5,-0.5\}$ are shown in \figref{ws1000assdis} (for $n=1000$) and \figref{wms100assdis} (for $n=100$).  Several points bear mention.  First, for target assortativity $\alpha_t = +0.5$ (both for $n \approx 1000$ and $n \approx 100$) there exist optimized $\beta^*(\alpha_t)$ for which BRW outperforms random sampling of a node and its neighbor degrees, and there exist non-optimal values of $\beta$ for which random sampling outperforms the non-optimized BRW.  Second, for target assortativity $\alpha_t = -0.5$ we see that the BRW is inferior to random sampling a node and its neighbor degrees for {\em all} values of $\beta$.  Third, the case of $n \approx 1000$ and $\alpha = +0.5$ shows there exists a non-trivial value of $\beta^*$ for this $\alpha$, meaning the optimal value is not at either endpoint of the $\beta$ interval of $[0,8]$.  This suggests that although BRWs can yield superior search times compared with sampling neighborhoods, doing so requires a correctly-tuned value of $\beta$ for the particular value of $\alpha$ (and in this case also $n$).  The inferiority of the BRW for disassortative graphs (here, $\alpha = -0.5$) may be on account of the fact that BRWs on such graphs are more likely to spend much of the search time trapped in a local minimum.  
\begin{figure}
\centering
\includegraphics[width=1.7in]{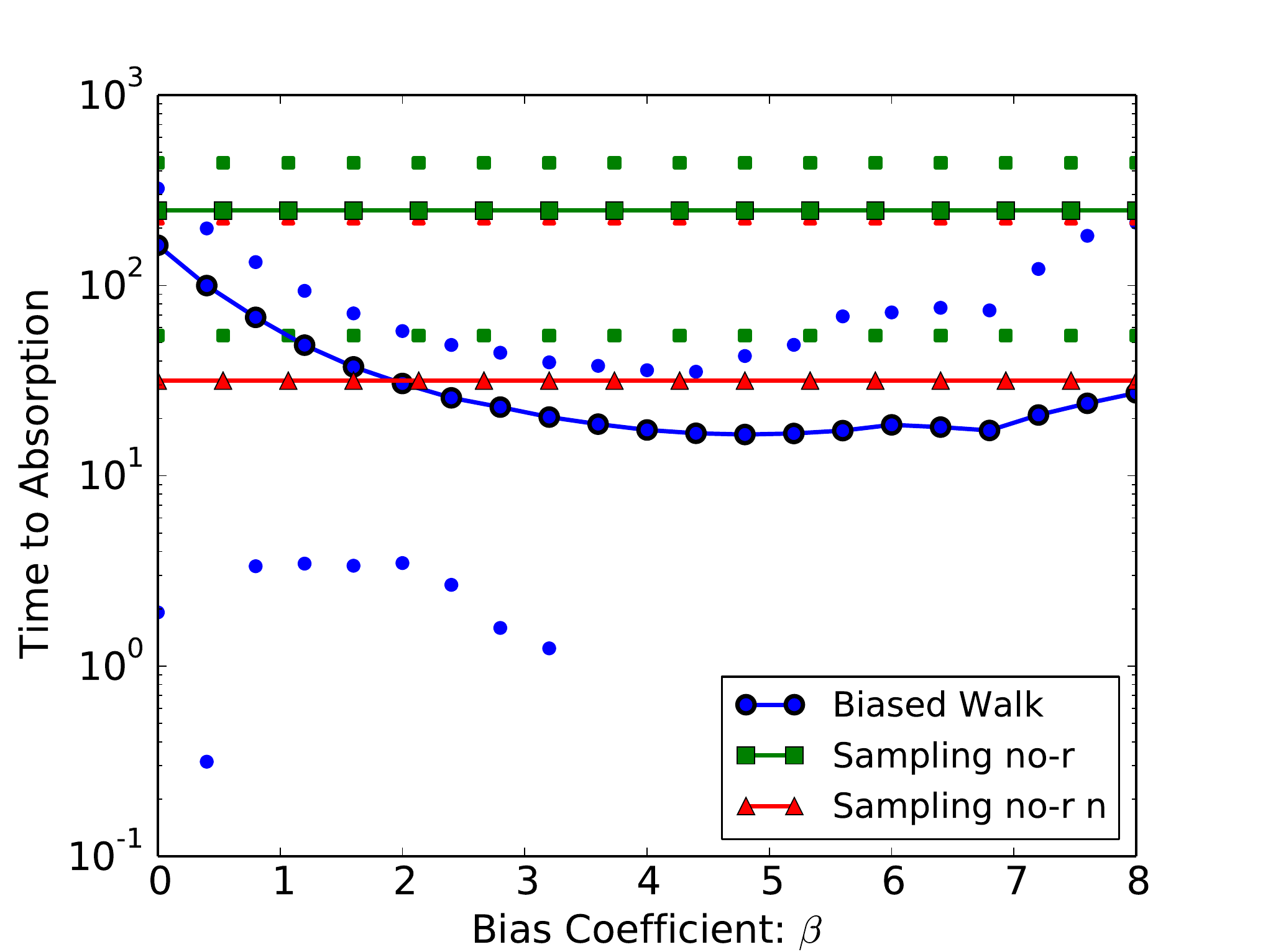}
\includegraphics[width=1.7in]{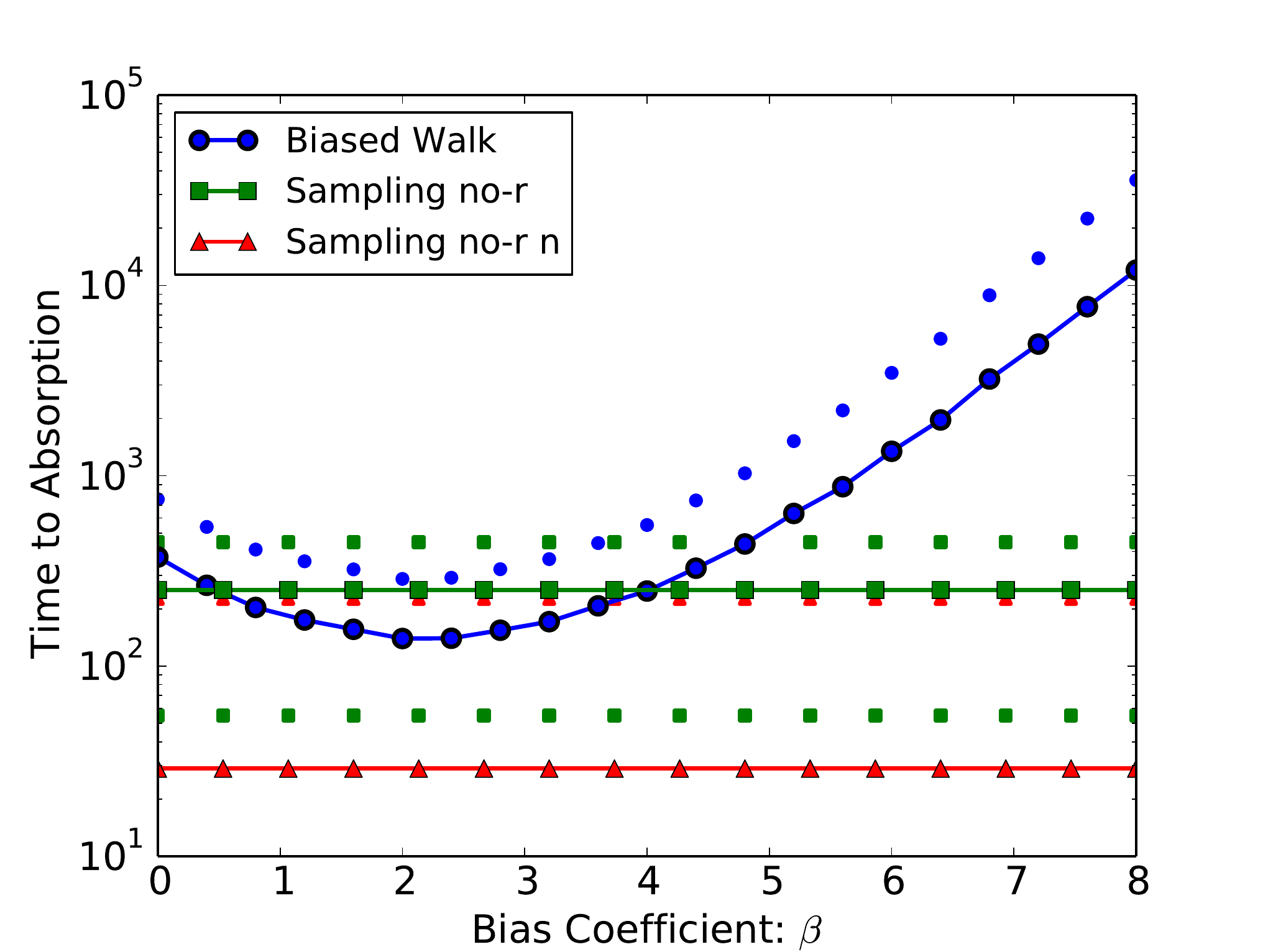}
\caption{ER graph with $\sim 1000$ nodes.  Expected time to absorbtion $\Ebb[T]$ (solid) and $\Ebb[T] \pm \mathrm{Std}(T)$ (dotted) for $i)$ the BRW (blue) and $ii)$ random sampling without replacement (observing $a)$ just the degree of the sampled node (green) and $b)$ degrees of node and its neighbors (red)) versus $\beta$.  $\alpha_t = +0.5$ (left) and $\alpha_t = -0.5$ (right).}
\label{fig:ws1000assdis}
\end{figure}
\begin{figure}
\centering
\includegraphics[width=1.7in]{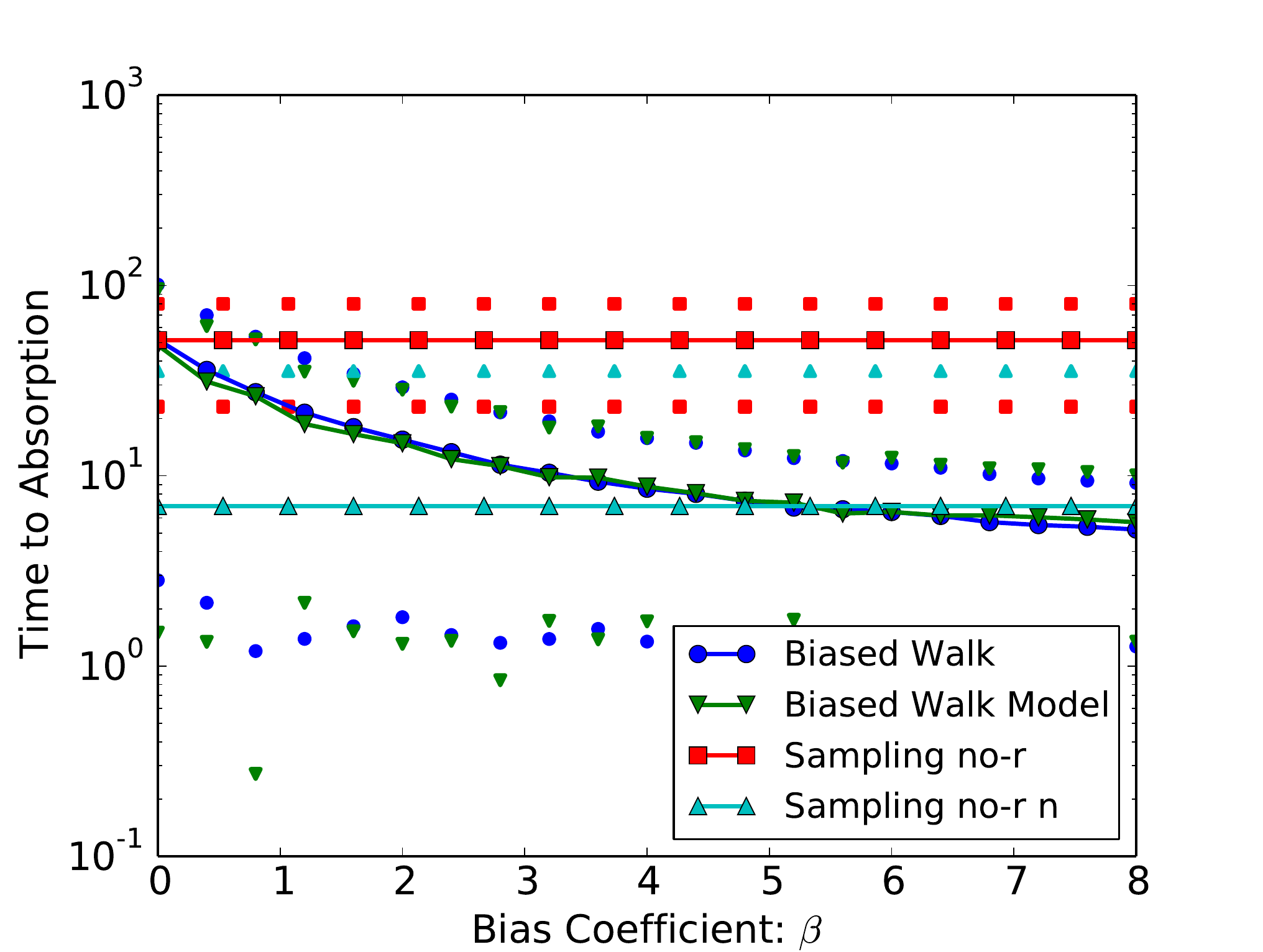}
\includegraphics[width=1.7in]{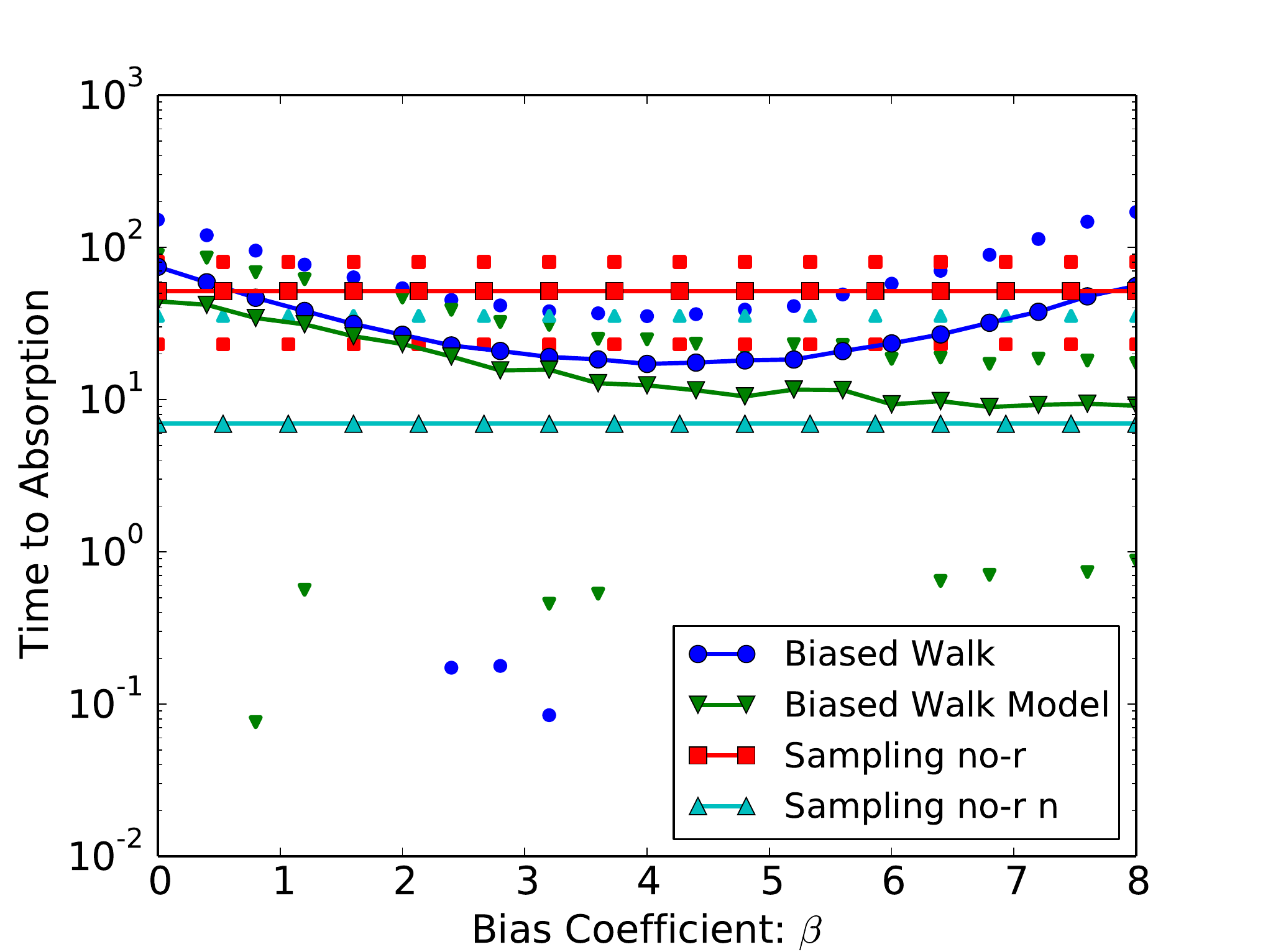}
\caption{Same caption as \figref{ws1000assdis} but for an ER graph with $\sim 100$ nodes.  Also shown is mean absorbtion time $\Ebb[T_W]$ predicted by the model.}
\label{fig:wms100assdis}
\end{figure}
\subsection{Optimal Bias Coefficient $\beta$} 

In this subsection we extend $\alpha_t$ to include the nine values mentioned earlier.  For each $\alpha_t$ we find $\beta^*(\alpha_t)$ and $\Ebb[T^*(\alpha_t)]$ using $\beta^*(\alpha_t)$, and plot both these functions against $\alpha_t$.  Because we observe that the dependence of $\Ebb[T]$ on $\beta$ can be somewhat flat near the optimal, meaning there is some degree of insensitivity to the precise value of $\beta$, we actually compute the interval $[\beta_{\rm min}(\alpha_t),\beta_{\rm max}(\alpha_t)]$ containing $\alpha^*(\alpha_t)$, where the interval holds all values of $\beta$ for which the corresponding value of $\Ebb[T]$ is within $10\%$ of the optimal value $\Ebb[T^*]$.  

The results are shown in \figref{optb} and \figref{optat}.  Several points bear mention.  First, \figref{optb} shows that the optimal bias coefficient $\beta^*$ for $\beta \in [0, \ldots, 8]$ tends to increase with increasing assortativity $\alpha_t$ of the graph.  More sample graphs for each $\alpha_t$, more points $\alpha_t \in [-1,+1]$, and a larger search range for $\beta$ than the current $[0,8]$ are required to confirm this initial observation.  Second, \figref{optat} shows that for both $n \approx 100$ and $n \approx 1000$ there exists an interval of $\alpha$ over which optimized BRWs outperform random sampling of node and neighbor degrees.  Again, more extensive simulations are required.  However, these preliminary results suggest BRWs are inferior to sampling for graphs with negative assortativity.  

\begin{figure}
\centering
\includegraphics[width=1.7in]{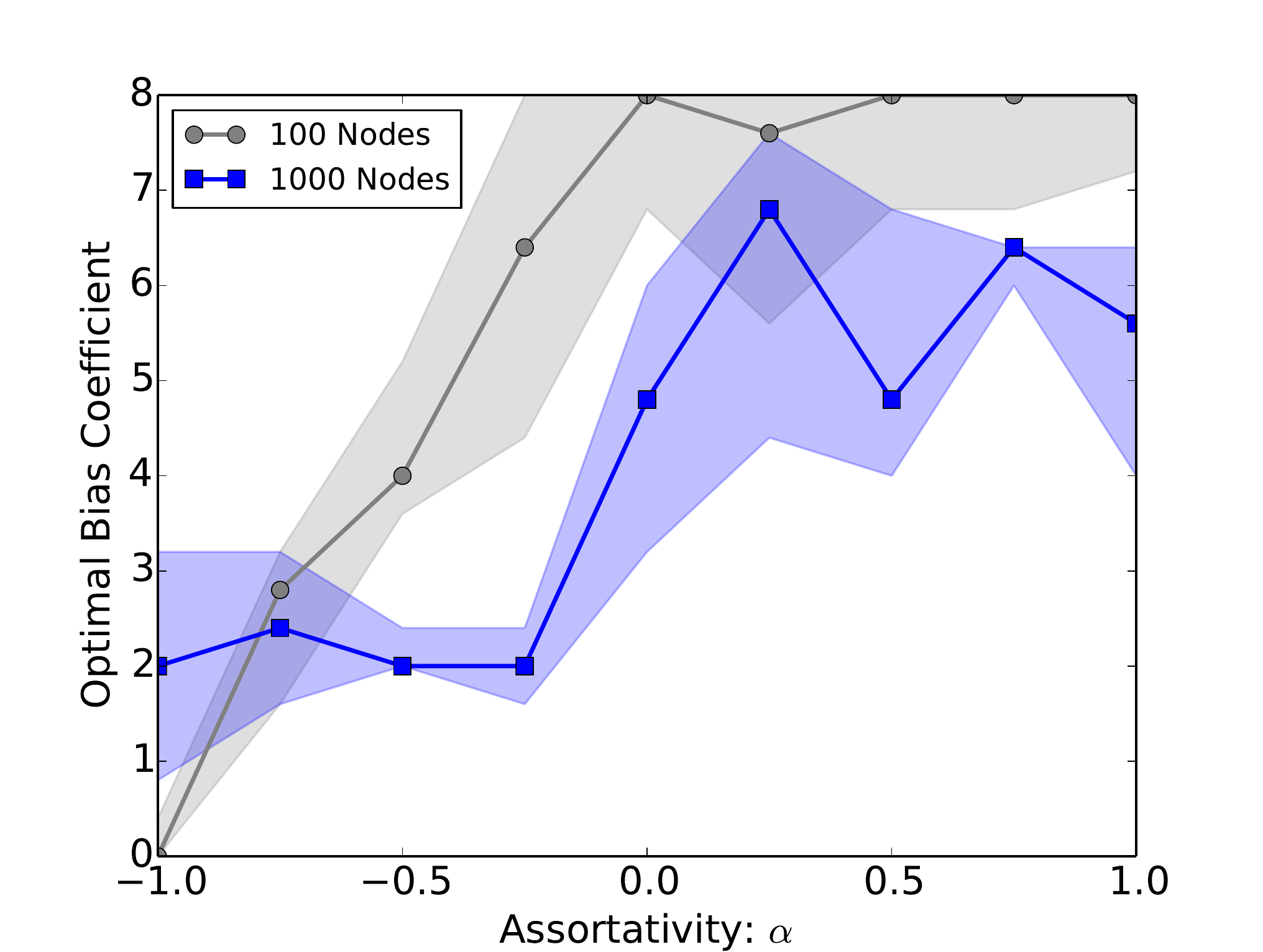}
\includegraphics[width=1.7in]{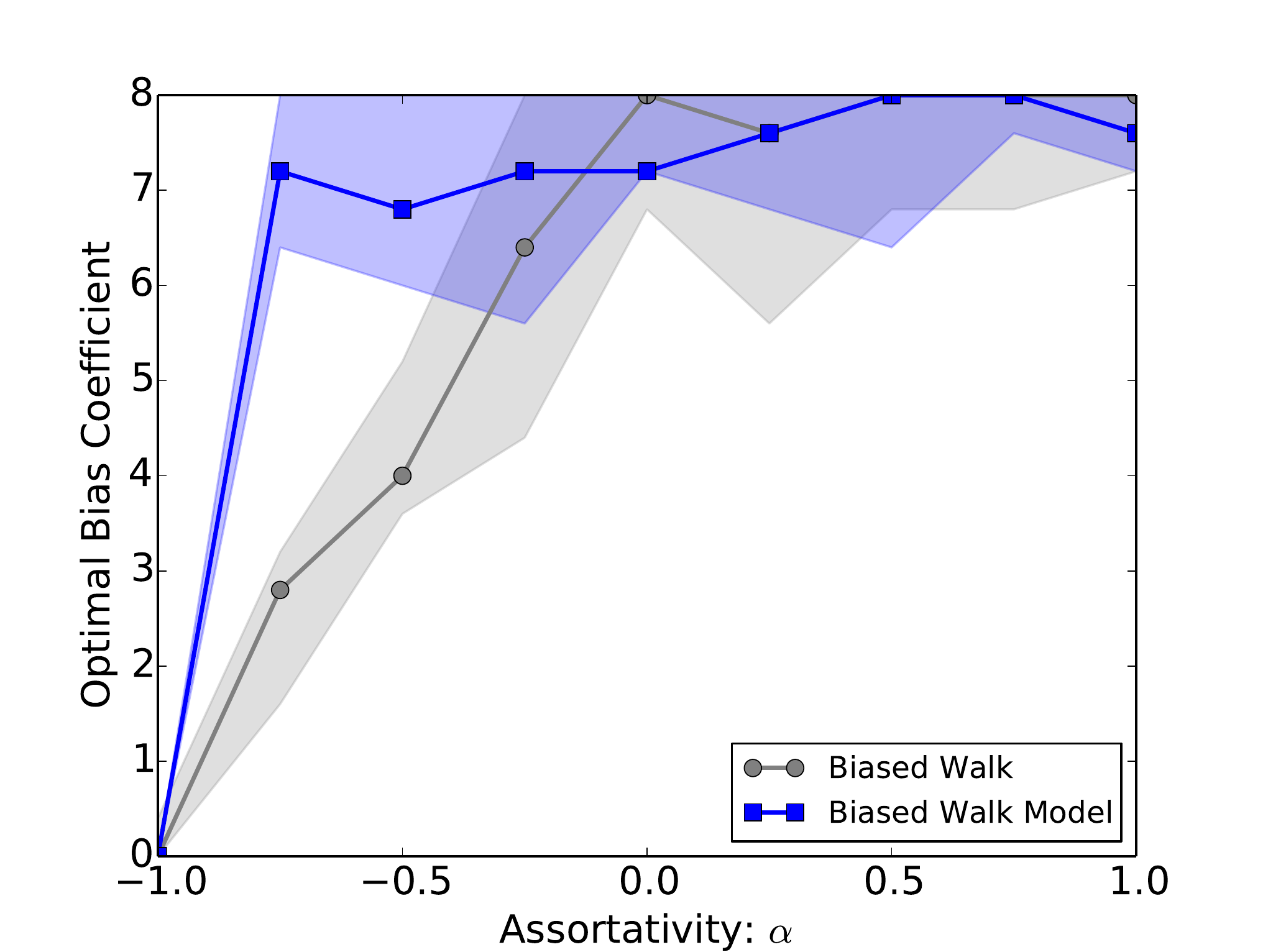}
\caption{The optimal bias coefficient $\beta^*$ (points) and the interval $[\beta_{\rm min},\beta_{\rm max}]$ of points for which $\Ebb[T]$ is within $10\%$ of $\Ebb[T^*]$ (shaded) vs.\ the target assortativity $\alpha_t$.  {\bf Left:} comparison of $n \approx 100$ (gray) and $n \approx 1000$ (blue).  {\bf Right:} comparison of $\beta^*$ for $n=100$ for actual graph (gray) and reduced state space model (blue).}
\label{fig:optb}
\end{figure}

\begin{figure}
\centering
\includegraphics[width=1.7in]{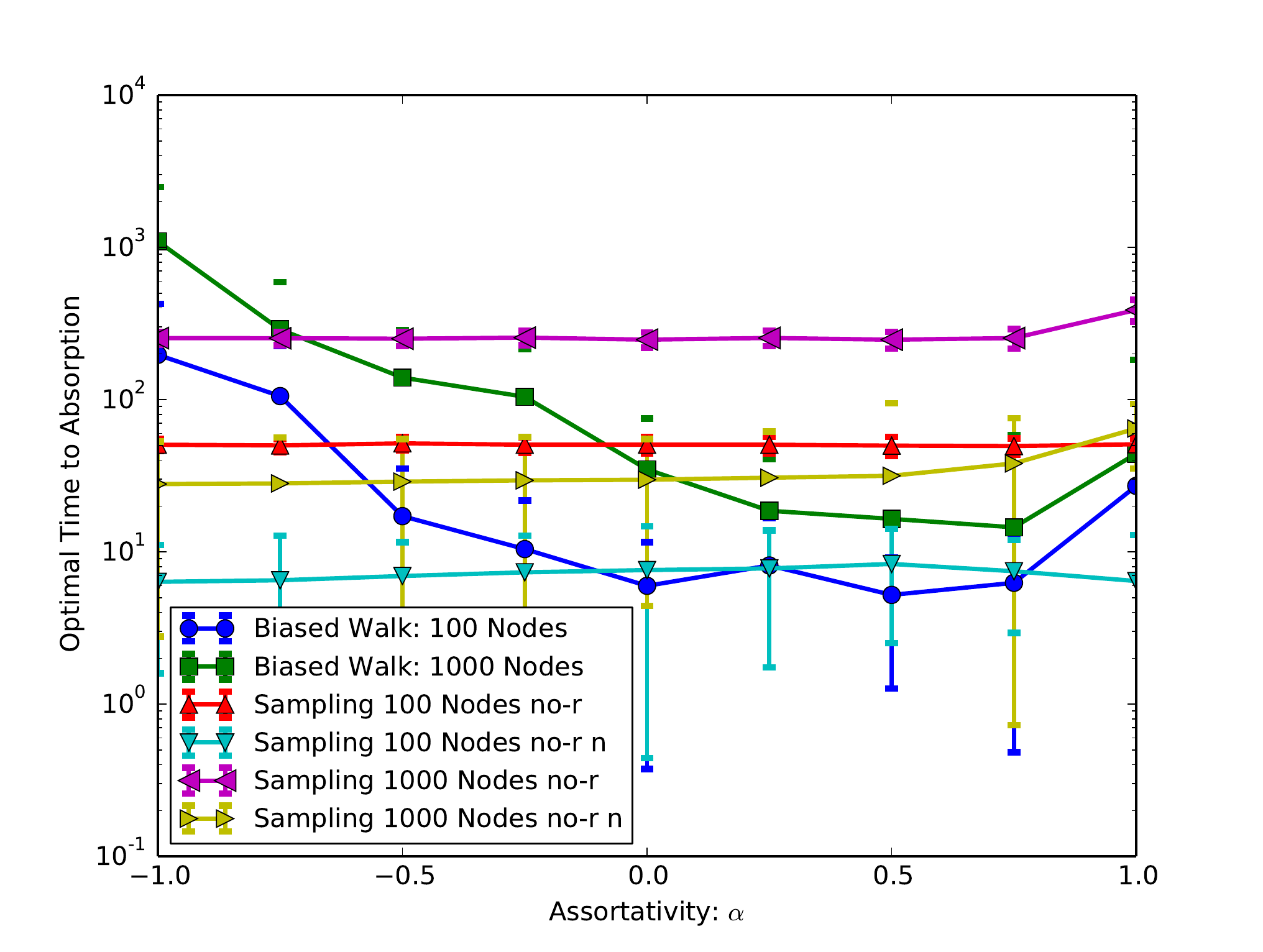}
\includegraphics[width=1.7in]{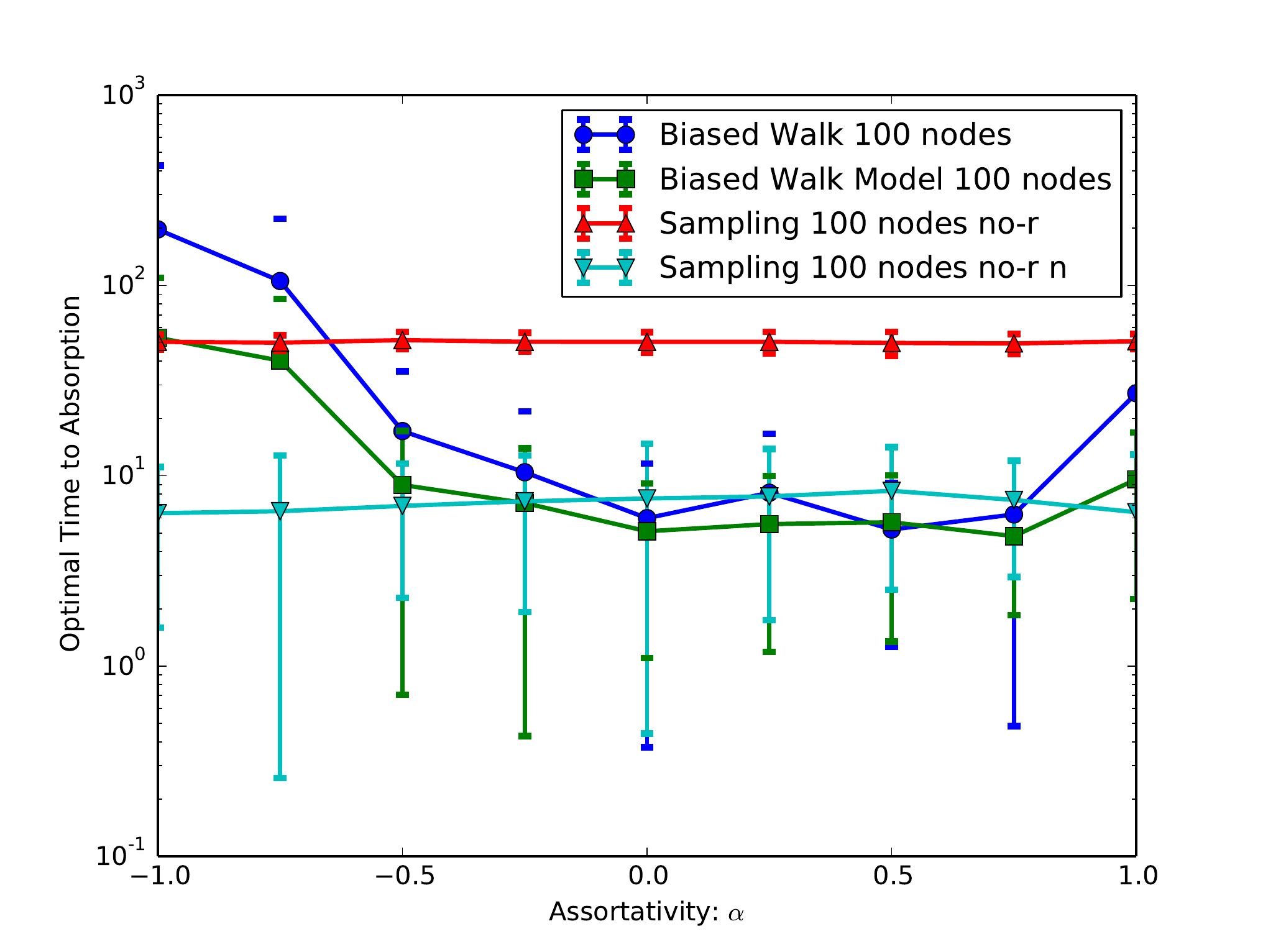}
\caption{Mean absorbtion times $\Ebb[T^*]$ vs.\ $\alpha_t$, where for each $\alpha_t$ we use the optimized value $\beta^*(\alpha_t)$ from \figref{optb}.  {\bf Left:} comparison of $n \approx 100$ (blue) and $n \approx 1000$ (gray).  {\bf Right:} comparison of $\Ebb[T^*]$ for actual graph (blue) and reduced state space model (green).}
\label{fig:optat}
\end{figure}

\subsection{Biased Random Walk Model} 

Finally, the two right side plots in \figref{optb} and \figref{optat} include results for both the simulations of BRWs on the graph as well as analytical computations using the reduced state space model.  In particular, \figref{optb} shows that $\beta^*$ for minimizing $\Ebb[T_V]$ and $\beta^*$ for minimizing $\Ebb[T_W]$ are not exactly equal, but are comparable, and show the same rough increasing trend as a function of $\alpha$.  Moreover, \figref{optat} shows $\Ebb[T_V^*]$ is comparable to $\Ebb[T_W^*]$ for   certain values of $\alpha$.    


\section{Future Work}


The main contributions are $i)$ a potentially useful model for analytically computing the expected absorbtion time of a biased random walk using a reduced state space model, and $ii)$ preliminary comparison of the absorbtion time between a BRW and random sampling to find a max degree node.  Our current goal is to find an approximation for $\Ebb[T_W]$.

\bibliographystyle{IEEEtran}
\bibliography{sources}

\end{document}